\documentclass{IEEEtran}
\usepackage{cite}
\usepackage{amsmath,amssymb,amsfonts}
\usepackage{algorithmic}
\usepackage{algorithm}
\usepackage{graphicx}
\usepackage{textcomp}

\usepackage{bm}

\usepackage{amsthm}
\usepackage[font=footnotesize]{caption}
\usepackage{subcaption}
\usepackage{upgreek}

\usepackage{makecell}
\usepackage{booktabs}
\usepackage{multirow}

\usepackage{array}
\usepackage{censor}

\newif\ifcensoring

\newcommand{\RNum}[1]{\uppercase\expandafter{\romannumeral #1\relax}}

\newtheorem{theorem}{Theorem}

\newtheorem{definition}{Definition}

\newtheorem{proposition}{Proposition}

\censoringfalse

\def\BibTeX{{\rm B\kern-.05em{\sc i\kern-.025em b}\kern-.08em
    T\kern-.1667em\lower.7ex\hbox{E}\kern-.125emX}}
\begin{document}

\title{A Control Barrier Function-Constrained Model Predictive Control Framework for Safe Reinforcement Learning}
\author{Ali Umut Kaypak, Prashanth Krishnamurthy, \IEEEmembership{Member, IEEE} and Farshad Khorrami, \IEEEmembership{Fellow, IEEE}
\thanks{This work was supported in part by the New York University Abu Dhabi (NYUAD) Center for Artificial Intelligence and Robotics (CAIR), funded by Tamkeen under the NYUAD Research Institute Award CG010.}
\thanks{The authors are with Control/Robotics Research Laboratory, Electrical and Computer Engineering Department, Tandon School of Engineering, New York University, Brooklyn, NY 11201, USA. E-mail: \{ak10531, prashanth.krishnamurthy, khorrami\}@nyu.edu.}
}
\maketitle

\begin{abstract}
Ensuring safety under unknown and stochastic dynamics remains a significant challenge in reinforcement learning (RL). In this paper, we propose a model predictive control (MPC)-based safe RL framework, called Probabilistic Ensembles with CBF-constrained Trajectory Sampling (PECTS), to address this challenge. PECTS jointly learns stochastic system dynamics with probabilistic neural networks (PNNs) and control barrier functions (CBFs) with Lipschitz-bounded neural networks. Safety is enforced by incorporating learned CBF constraints into the MPC formulation while accounting for the model stochasticity. This enables probabilistic safety under model uncertainty. To solve the resulting MPC problem, we utilize a sampling-based optimizer together with a safe trajectory sampling method that discards unsafe trajectories based on the learned system model and CBF. We validate PECTS in various simulation studies, where it outperforms baseline methods. 
\end{abstract}

\begin{IEEEkeywords}
Control barrier function, model predictive control, reinforcement learning, safety-critical control
\end{IEEEkeywords}

\section{Introduction}

Safe RL addresses the problem of learning policies that satisfy safety constraints in safety-critical tasks to prevent irreversible damage to the robot or its environment. Prior work in this area encompasses a diverse range of approaches, including policy optimization-based, formal methods-based, control theory-based, and Gaussian process-based methods~\cite{safe_rl_review}. Among these, control theory-based approaches provide strong theoretical guarantees of safety. However, their applicability often depends on the availability of a system model or a predefined safety function. This dependency limits their practicality when environmental constraints and system dynamics are unknown or challenging to model accurately. Motivated by these limitations, we propose a model-based safe RL framework that jointly learns a stochastic system model and a CBF, ensuring probabilistic safety under model stochasticity.

We build PECTS upon two complementary research directions: CBFs for safety assurance~\cite{discrete_cbf, discrete_cbf_mpc, KAYPAK2026105263} and model-based reinforcement learning (MBRL)~\cite{pets_paper} for efficient policy learning under uncertainty. Discrete-time CBFs were initially utilized for deterministic systems in one-step safety filters~\cite{discrete_cbf} and MPC controllers~\cite{discrete_cbf_mpc}. More recently, their extensions to stochastic systems have been explored, and corresponding safety guarantees in such stochastic environments have been established~\cite{stochastic_discrete_cbf}. Separately, MBRL was shown to be effective in reducing the high sample complexity of model-free RL and in handling complex, stochastic dynamics through PNN–based system modeling~\cite{pets_paper, PDDM}. In this work, we unify these two research directions within a safe RL framework by combining PNN-based stochastic system modeling with CBF-based probabilistic safety guarantees for stochastic discrete-time systems.

On this foundation, we formulate PECTS as an MPC-based agent whose parameters are learned online. We incorporate CBF constraints within the MPC formulation and employ a sampling-based MPC optimizer to solve the resulting optimization problem. To enforce CBF constraints during optimization, we introduce a safe trajectory sampling mechanism that discards unsafe trajectories, ensuring the optimizer operates only on safe trajectories. Specifically, our key contributions are as follows: 
\textbf{(1)} developing a novel CBF-based safe MBRL algorithm that explicitly accounts for stochasticity in the learned dynamics; 
\textbf{(2)} proposing a CBF learning methodology for stochastic discrete-time systems; and 
\textbf{(3)} demonstrating the effectiveness of the proposed method in extensive simulation studies.

\section{Related Works}
Standard RL algorithms are broadly categorized into model-based and model-free paradigms~\cite{rl_review}. This distinction extends to CBF-based safe RL. Notably, even when a CBF-based safe RL algorithm employs a model-free backbone for policy learning, it often relies on model information for safety enforcement. Consequently, the majority of CBF-based safe RL algorithms use model information~\cite{emam_safe_rl, airaldi2025probabilistically, sabouni_safe_rl, pmlr-v202-wang23as, NEURIPS2021_d71fa38b}. While purely model-free approaches exist~\cite{model_free_rl_cbf, model_free_rl_clbf_new}, they are often less sample-efficient than model-based approaches. Most model-based safe RL methods employing CBFs rely on varying degrees of prior knowledge about the system dynamics~\cite{emam_safe_rl, airaldi2025probabilistically, sabouni_safe_rl}. For instance, the approach presented in~\cite{emam_safe_rl} utilizes a known nominal system model and a known CBF, learning the unknown residual dynamics via Gaussian Processes. In~\cite{airaldi2025probabilistically}, an MPC agent is defined and its parameters are optimized online, assuming a known system model. A shared limitation of these approaches is their dependence on a partially or fully known system model \textit{a priori}.

Many CBF-based safe RL methods assume a known CBF~\cite{emam_safe_rl, airaldi2025probabilistically, sabouni_safe_rl}. However, this assumption is restrictive for unknown environments. Moreover, even when the environmental constraints are known, constructing an analytical CBF remains a non-trivial task. To address this challenge, neural CBFs have been introduced for various scenarios~\cite{bolun_cbf_learning, robey_cbf_learning, bolun_2}. We integrate neural CBFs into an RL framework for unknown stochastic discrete-time systems, jointly learning both the system dynamics and the safety barrier. Similar to our approach, the methods in~\cite{pmlr-v202-wang23as, NEURIPS2021_d71fa38b}  jointly learn a barrier function and system, but they require safety regions specified \textit{a priori}. The approach in~\cite{song_safe_rl} circumvents this assumption, as we do, by learning a CBF from sensor data. However, their formulation restricts the CBF to an affine structure, limiting its expressiveness.

There are sampling-based MPC methods using CBFs for safety~\cite{9993190,11312129,YinJ-RSS-25}. For example, (stochastic) CBF constraints are used to define a trust region for action sampling in Model Predictive Path Integral (MPPI) in~\cite{9993190}. In~\cite{11312129}, MPPI is performed on a CBF-based guaranteed-safe system, ensuring that all sampled trajectories satisfy safety constraints. The approach in~\cite{YinJ-RSS-25} learns a neural CBF and incorporates its descent condition as a state constraint within a sampling-based MPC controller to enforce safety. Although PECTS shares the use of CBF-constrained sampling-based MPC with these methods, it is developed for safe RL settings where the system dynamics and safety constraints are initially unknown. This setting is beyond the scope of these approaches.

\section{Problem Formulation}
\label{sec:problem_formulation}
Let $(\Omega, \mathcal{F}, \mathbb{P})$ be a probability space, and let $\mathcal{F}_0 \subset \mathcal{F}_1 \subset \dots \subset \mathcal{F}$ denote a filtration of $\mathcal{F}$. Consider a stochastic discrete-time system of the form
\begin{equation}
\label{eq:system_model}
    s_{k+1} = f(s_k, a_k, d_k), \quad \forall k \in \mathbb{Z}_{\geq0}
\end{equation}
where $f: \mathcal{S} \times \mathcal{A} \times \mathcal{W} \to \mathcal{S}$ is a continuous function, $s_k \in \mathcal{S} \subseteq \mathbb{R}^n$ denotes the system state, $a_k \in \mathcal{A} \subseteq \mathbb{R}^q$ is the control input (action), and $d_k \in \mathcal{W} \subseteq \mathbb{R}^d$ is an $\mathcal{F}_{k+1}$ measurable noise term. We assume that for any $(s_k,a_k)\in\mathcal{S}\times\mathcal{A}$, the conditional distribution of $s_{k+1}$ induced by \eqref{eq:system_model} is unimodal. We also assume that all random variables and their functions are integrable in this paper.

For a safe set $\mathcal{C} \subset \mathcal{S} \subseteq \mathbb{R}^n$ and an initial state $s_0$ satisfying $s_0 \in \mathcal{C}$ almost surely, the objective is to find a policy $\pi: \mathcal{S} \to \mathcal{A}$ that maximizes the expected finite horizon cumulative reward while ensuring that the system remains within the safe set with probability at least $1- p$:
\begin{equation}
\label{eq:objective}
    \begin{aligned}
    \max_{\pi} \quad & J_r(\pi) = \mathbb{E}_{\pi}\!\left[ \sum_{k=0}^{M-1} r(s_k, a_k, d_k) \right] \\
     \text{s.t.} \quad & \mathbb{P}_{\pi}\!\left( s_k \in \mathcal{C},\ \forall k = 1, \dots, M \right) \ge 1-p
\end{aligned}
\end{equation}
where $M \in \mathbb{N}^{+}$ denotes the finite time horizon, and $r: \mathcal{S} \times \mathcal{A} \times \mathcal{W} \to \mathbb{R}$ is the reward function. The functions $f$ and $r$, as well as the safe set $\mathcal{C}$, are assumed to be unknown to the agent. The agent observes the state $s_k$ at each time step and applies the input $a_k=\pi(s_k)$; after applying $a_k$, it observes the resulting reward $r(s_k,a_k,d_k)$ at the next time step. The agent is also equipped with a safety sensor that provides local safety information. At each time step $k$, the sensor returns a set of nearby states together with binary indicators specifying whether each such state is safe or unsafe.

\section{Methodology}
At each time step $k$, given the current system state $s_k$, PECTS plans over a finite horizon $H\leq M$ using a stochastic MPC objective. Specifically, it optimizes an input sequence $a_{0:H-1} := (a_0,\dots,a_{H-1})$ to maximize the expected cumulative reward under a learned stochastic model while enforcing a CBF condition:
\begin{align}
            &\max_{a_{0:H-1}} \; J(a_{0:H-1}) := \mathbb{E}\!\left[\sum_{\tau=0}^{H-1} \hat{r}_\tau \right] \label{eq:mpc_formulation}
        \\
        &e_\tau \sim \mathrm{Unif}\!\left(\{1,\dots,E\}\right), \quad \forall \tau =0, \dots, H-1, \notag
        \\ 
        &\hat{s}_{\tau+1} \sim f_{\theta,e_\tau}( \cdot \mid \hat{s}_\tau, a_\tau), \quad \forall \tau =0, \dots, H-1, \notag
        \\
        &\hat{r}_{\tau} \sim r_{\theta,e_\tau}(\cdot \mid \hat{s}_\tau, a_\tau), \quad \forall \tau =0, \dots, H-1, \notag \\
        &\mathbb{E}\!\left[ h_{\phi}(\hat{s}_{\tau+1}) \mid \hat{s}_{\tau}, a_{\tau}, e_\tau\right] \geq \kappa h_{\phi}(\hat{s}_\tau),  \quad \forall \tau =0, \dots, H-1, \notag \\
        &a_\tau \in \mathcal{A}, \quad \forall \tau =0, \dots, H-1, \notag\\
        &\hat{s}_0 = s_{k}.\notag
\end{align}
In this formulation, $\hat{s}_\tau$ and $\hat{r}_\tau$ denote the rollout state and reward induced by an ensemble of $E$ PNNs with parameters $\theta$. The discrete index $e_\tau \in \{1,\dots,E\}$ is uniformly distributed over $E$ networks, and selects the ensemble member used at rollout step $\tau$. The next-state and reward distributions are given by the corresponding PNN, denoted by $f_{\theta,e_\tau}(\cdot \mid s_\tau,a_\tau)$ and $r_{\theta,e_\tau}(\cdot \mid s_\tau,a_\tau)$, respectively. Each PNN parameterizes a Gaussian distribution with diagonal covariance, modeling aleatoric uncertainty. A Gaussian distribution is a sensible choice as we assume the conditional distribution of $s_{k+1}$ in~(\ref{eq:system_model}) is unimodal. When the conditional distribution family is known, a PNN that parameterizes that family can be used. We aim to capture epistemic uncertainty using an ensemble of bootstrapped PNNs; see~\cite{pets_paper} for the details of PNNs.

The function $h_\phi$ in~(\ref{eq:mpc_formulation}) denotes the CBF, whose superlevel set, i.e., $\mathcal{C}_{\phi} := \{ s \in \mathcal{S} : h_{\phi}(s) \ge 0 \}$, represents a learned approximation of the safe set in~(\ref{eq:objective}). After solving~(\ref{eq:mpc_formulation}) via a sampling-based optimizer, the first system input in the solution sequence is applied to the system in a receding-horizon manner. The resulting state transition, reward, and safety-sensor labels are recorded in data buffers, which are used to periodically update the PNN ensemble and the CBF.

In subsection~\ref{sec:method_CBF}, we present the theoretical background of CBFs in stochastic discrete-time systems and describe how the CBF $h_{\phi}$ is learned within our algorithm. In subsection~\ref{sec:traj_desc}, we explain how the MPC problem in~\eqref{eq:mpc_formulation} is solved via a sampling-based optimizer.

\subsection{Learning CBF for Stochastic Discrete-time Systems}
\label{sec:method_CBF}
For a given state-feedback controller $\pi:\mathcal{S} \rightarrow \mathcal{A}$, the closed-loop of the system (\ref{eq:system_model}) can be written as
\begin{equation}
\label{eq:closed_loop}
s_{k+1} = f(s_k, \pi(s_k), d_k)=:\tilde{f}(s_k,  d_k), \quad \forall k \in \mathbb{Z}_{\ge 0}.
\end{equation}
Let the safe set be defined by the superlevel set of a continuous function $h$, i.e., $\mathcal{C}=\{s\in\mathbb{R}^n \mid h(s)\ge 0\}$. For any $K \in \mathbb{N}^+$, the $K$-step exit probability is the probability that the stochastic closed-loop system~(\ref{eq:closed_loop}) leaves the safe set $\mathcal{C}$ within $K$ steps. Its formal definition is given as follows.
\begin{definition}[K-Step Exit Probability \cite{stochastic_discrete_cbf}]
        Let $h : \mathbb{R}^n \to \mathbb{R}$ be a continuous function. 
For any $K \in \mathbb{N}^+$, and initial condition 
$s_0 \in \mathbb{R}^n$, the $K$-step exit probability of the closed-loop system (\ref{eq:closed_loop}) is given by:
\begin{equation}
P_u(K, s_0) 
= \mathbb{P} \left\{ \min_{k \in \{0, \ldots, K\}} h(s_k) < 0 \right\}. 
\end{equation}
    \end{definition}
The following theorem upper-bounds the K-exit probability when $h$ satisfies stochastic discrete-time CBF constraints.

\begin{theorem}[Upper-bound on K-Step Exit Probability~\cite{stochastic_discrete_cbf}]
\label{thrm:1}
Let $K \in \mathbb{N}^+$, $\sigma > 0$, $\kappa \in [0, 1]$ and $\delta > 0$. If $\forall k \leq K$
the following inequalities hold:
\begin{equation}
\label{eq:prob_ineq}
\begin{aligned}
\mathbb{E}[h(s_k) \mid \mathcal{F}_{k-1}] - h(s_k) \leq \delta, \\
\mathrm{Var}(h(s_{k+1}) \mid \mathcal{F}_k) \leq \sigma^2,
\end{aligned}
\end{equation}
and the condition
\begin{equation}
\label{eq:stoch_cbf_cond}
\mathbb{E}\!\left[h\left(\tilde{f}(s_k, d_k)\right) \Big| \mathcal{F}_k\right] \geq \kappa h(s_k)
\end{equation}
is satisfied, then the $K$-step exit probability is bounded as
\begin{equation}
P_u(K, s_0) \leq T\left( \frac{\kappa^K h(s_0)}{\delta}, \frac{\sigma \sqrt{K}}{\delta} \right)
\end{equation}
where
$T(\lambda, \xi) :=
\left( \frac{\xi^2}{\lambda + \xi^2} \right)^{\lambda + \xi^2} e^{\lambda}.$
\end{theorem}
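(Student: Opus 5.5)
I would follow the martingale-concentration route of Freedman's inequality. The idea is to turn the discounted CBF condition into a supermartingale-type process and then bound its downward deviation.

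\textbf{Step 1: build the supermartingale.} Define $W_k := \kappa^{K-k} h(s_k)$ for $k=0,\dots,K$. The condition \eqref{eq:stoch_cbf_cond} gives
\[
\mathbb{E}[W_{k+1}\mid\mathcal{F}_k]=\kappa^{K-k-1}\mathbb{E}[h(s_{k+1})\mid\mathcal{F}_k]\ge \kappa^{K-k}h(s_k)=W_k,
\]
so $W_k$ is a submartingale. Equivalently, $-W_k$ is a supermartingale. Consider its Doob decomposition $W_k=W_0+A_k+M_k$, where $A_k$ is predictable and nondecreasing and $M_k$ is a martingale. The increments are $\Delta M_{k}=\kappa^{K-k}\big(h(s_k)-\mathbb{E}[h(s_k)\mid\mathcal{F}_{k-1}]\big)$.

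\textbf{Step 2: control the increments.} Since $\kappa\le 1$, the hypotheses \eqref{eq:prob_ineq} give $-\Delta M_k\le \kappa^{K-k}\delta\le\delta$. They also give $\mathrm{Var}(\Delta M_k\mid\mathcal{F}_{k-1})\le\sigma^2$. Hence the predictable quadratic variation of $-M$ up to time $K$ is at most $K\sigma^2$.

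\textbf{Step 3: relate exit to a martingale deviation.} If $h(s_j)<0$ for some $j\le K$, then $W_j<0$. Since $A_j\ge0$, this forces $M_j<-W_0=-\kappa^K h(s_0)$. So the exit event is contained in $\{\max_{j\le K}(-M_j)>\kappa^K h(s_0)\}$. If $h(s_0)<0$, the bound is trivial or vacuous, so I would assume $h(s_0)\ge 0$.

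\textbf{Step 4: apply Freedman's inequality.} I would use Freedman's inequality in its sharp Bennett form for a martingale with increments bounded above by $\delta$ and total conditional variance at most $K\sigma^2$:
\[
\mathbb{P}\Big(\max_{j\le K}(-M_j)\ge a\Big)\le \Big(\tfrac{v}{a\delta+v}\Big)^{a/\delta+v/\delta^2}e^{a/\delta}, \qquad v=K\sigma^2.
\]
After rescaling by $\delta$, with $\lambda=a/\delta=\kappa^K h(s_0)/\delta$ and $\xi^2=K\sigma^2/\delta^2$, this is exactly $T(\lambda,\xi)$.

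\textbf{Main obstacle.} The hard step is Step 4 together with the bookkeeping that makes the claimed constants match. This requires proving, or importing, the exponential-supermartingale bound $\mathbb{E}[e^{t(-\Delta M_k)}\mid\mathcal{F}_{k-1}]\le \exp\!\big(\sigma^2(e^{t\delta}-1-t\delta)/\delta^2\big)$. One then applies Ville's maximal inequality and optimizes at $t=\delta^{-1}\log(1+a\delta/v)$.

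One indexing point needs care. The variance hypothesis is stated for $h(s_{k+1})$ given $\mathcal{F}_k$ for $k\le K$, while the drift hypothesis is stated for $k\le K$. Both have to cover all $K$ increments, which they do. I would also double-check that the discount factors $\kappa^{K-k}\le1$ only help, since they shrink both the increment bound and the variances.
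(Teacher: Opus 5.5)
Your proposal is correct. It is essentially the argument of \cite{stochastic_discrete_cbf}, from which the paper imports this theorem without proof: Doob-decompose the submartingale $\kappa^{K-k}h(s_k)$, use $\kappa^{K-k}\le 1$ to bound the negated martingale increments by $\delta$ and their conditional variances by $\sigma^2$, and apply Freedman's inequality with $\lambda=\kappa^K h(s_0)/\delta$ and $\xi^2=K\sigma^2/\delta^2$.

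Two small corrections. First, the case $h(s_0)<0$ is not trivial. There $\lambda<0$ gives $T<1$ (whenever $\lambda+\xi^2>0$) while $P_u(K,s_0)=1$, so $h(s_0)\ge 0$ (supplied by $s_0\in\mathcal{C}$ in the paper's setup) is a genuine hypothesis rather than a harmless reduction. Second, the implication $h(s_j)<0\Rightarrow W_j<0$ in Step~3 needs $\kappa>0$; the case $\kappa=0$ must be handled separately, but it is trivial because then $\lambda=0$ and $T=1$.
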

In the following proposition, we bound $\delta$ and $\sigma^2$ in (\ref{eq:prob_ineq}):
\begin{proposition}
\label{prop:1}
Suppose \(h:\mathcal{S}\to \mathbb{R}\) is globally Lipschitz with constant \(L_h \ge 0\) and bounded by \(C_h \ge 0\). Also assume that \(\tilde{f}\) is globally Lipschitz in the noise argument \(d_k\) with constant \(L_f \ge 0\), i.e.,
\(
\| \tilde{f}(s_k, \hat d_k) - \tilde{f}(s_k, \bar d_k) \|
\le L_f \|\hat d_k - \bar d_k\|
\quad \forall s_k \in \mathcal{S}, \ \forall \hat d_k, \bar d_k \in \mathbb{R}^d.
\)
Let the noise \((d_k)_{k \ge 0}\) have uniformly bounded conditional covariance, i.e., there exists a constant \(\sigma_d^2\) such that \(\operatorname{tr}(\operatorname{Cov}(d_k \mid \mathcal{F}_k)) \le \sigma_d^2\) for all \(k\). Then:
\begin{align}
\mathbb{E}[h(s_k)\mid \mathcal{F}_{k-1}] - h(s_k) &\le 2 C_h, \label{eq:ineq_expectation}\\
\mathrm{Var}(h(s_{k+1})\mid \mathcal{F}_k) &\le L_h^2 L_f^2 \sigma_d^2 .\label{eq:ineq_variance}
\end{align}
\end{proposition}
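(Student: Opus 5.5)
The two inequalities are independent, so I would prove them separately: the first only needs boundedness of $h$, the second only needs the Lipschitz assumptions.

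\textbf{Bound \eqref{eq:ineq_expectation}.} Since $|h(s)|\le C_h$ for all $s\in\mathcal{S}$, conditional monotonicity of expectation gives $\mathbb{E}[h(s_k)\mid\mathcal{F}_{k-1}]\le C_h$ almost surely. Likewise $-h(s_k)\le C_h$ pointwise. Adding the two gives $\mathbb{E}[h(s_k)\mid\mathcal{F}_{k-1}]-h(s_k)\le 2C_h$ almost surely, which is \eqref{eq:ineq_expectation}. Integrability is covered by the paper's standing assumption.

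\textbf{Bound \eqref{eq:ineq_variance}: setup.} Write $s_{k+1}=\tilde f(s_k,d_k)$. Here $s_k$ is $\mathcal{F}_k$-measurable, since it is a function of $s_0$ and $d_0,\dots,d_{k-1}$. Set $\mu_k:=\mathbb{E}[d_k\mid\mathcal{F}_k]$, which is also $\mathcal{F}_k$-measurable, and define the $\mathcal{F}_k$-measurable random variable $c:=h(\tilde f(s_k,\mu_k))$. I would use the variational characterization of conditional variance:
\begin{equation*}
\mathrm{Var}(X\mid\mathcal{F}_k)=\mathbb{E}[(X-\mathbb{E}[X\mid\mathcal{F}_k])^2\mid\mathcal{F}_k]\le\mathbb{E}[(X-c)^2\mid\mathcal{F}_k]
\end{equation*}
for any $\mathcal{F}_k$-measurable $c$. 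This holds because the cross term vanishes, leaving $\mathbb{E}[(X-c)^2\mid\mathcal{F}_k]=\mathrm{Var}(X\mid\mathcal{F}_k)+(\mathbb{E}[X\mid\mathcal{F}_k]-c)^2$.

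\textbf{Bound \eqref{eq:ineq_variance}: chaining the Lipschitz bounds.} Apply this with $X=h(s_{k+1})$. Composing the two Lipschitz bounds pointwise gives
\begin{equation*}
|h(\tilde f(s_k,d_k))-h(\tilde f(s_k,\mu_k))|\le L_h\|\tilde f(s_k,d_k)-\tilde f(s_k,\mu_k)\|\le L_hL_f\|d_k-\mu_k\|.
\end{equation*}
Squaring and taking conditional expectations yields $\mathrm{Var}(h(s_{k+1})\mid\mathcal{F}_k)\le L_h^2L_f^2\,\mathbb{E}[\|d_k-\mu_k\|^2\mid\mathcal{F}_k]$. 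The last factor equals $\operatorname{tr}(\operatorname{Cov}(d_k\mid\mathcal{F}_k))\le\sigma_d^2$, which gives \eqref{eq:ineq_variance}.

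\textbf{Main obstacle.} The delicate step is measure-theoretic: justifying that $c$ is a legitimate $\mathcal{F}_k$-measurable "constant" inside the conditional expectation. This needs $s_k$ and $\mu_k$ to be $\mathcal{F}_k$-measurable, and needs $\tilde f$ and $h$ to be Borel. Both hold because $f$ is continuous, $h$ is Lipschitz, and $\pi$ is assumed measurable. One also has to use the Lipschitz bound in $d$ uniformly in $s$, since the first argument $s_k$ is random; this is exactly what the "for all $s_k\in\mathcal{S}$" clause in the hypothesis provides.
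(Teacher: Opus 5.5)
Your proposal is correct and follows the paper's proof essentially step for step. The first bound comes from $|h|\le C_h$ applied to both terms, and the second uses $\mathrm{Var}(Y\mid\mathcal{F}_k)\le\mathbb{E}[(Y-c)^2\mid\mathcal{F}_k]$ with the same choice $c=h(\tilde f(s_k,\mathbb{E}[d_k\mid\mathcal{F}_k]))$, then the Lipschitz bounds on $h$ and $\tilde f$ and the trace identity; your justification of the cross-term cancellation and of the measurability of $c$ just makes explicit what the paper takes for granted.
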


\begin{proof}
Since $h$ is bounded by $C_h$, inequality~\eqref{eq:ineq_expectation} follows immediately. To establish~\eqref{eq:ineq_variance}, we use the conditional-variance bound: for a $Y \in \mathcal{L}^2$ and any $\mathcal{F}_k$-measurable random variable $c \in \mathcal{L}^2$, $\mathrm{Var}(Y \mid \mathcal{F}_k) \le \mathbb{E}[(Y-c)^2 \mid \mathcal{F}_k]$. Let $Y = h(s_{k+1})$ and let $c = h(\tilde{f}(s_k, \mathbb{E}[d_k \mid \mathcal{F}_k]))$. Note that both random variables belong to $\mathcal{L}^2$ since $h$ is assumed to be bounded, and $c$ is $\mathcal{F}_k$-measurable. Then:
\begin{align}
    &\mathrm{Var}(h(s_{k+1}) \mid \mathcal{F}_k)\\ 
    &\le \mathbb{E}\!\left[\left(h(s_{k+1}) - h\!\left(\tilde{f}(s_k, \mathbb{E}[d_k\mid \mathcal{F}_k])\right)\right)^2 \Bigm| \mathcal{F}_k \right] \label{eq:var_ineq} \\
    &\le L_h^2\, \mathbb{E}\!\left[\left\| s_{k+1} - \tilde{f}(s_k, \mathbb{E}[d_k\mid \mathcal{F}_k]) \right\|_2^2 \Bigm| \mathcal{F}_k \right] \label{eq:prf_liph} \\
    &= L_h^2\, \mathbb{E}\!\left[\left\| \tilde{f}(s_k, d_k) - \tilde{f}(s_k, \mathbb{E}[d_k\mid \mathcal{F}_k]) \right\|_2^2 \Bigm| \mathcal{F}_k \right] \\
    &\le L_h^2 L_f^2 \, \mathbb{E}\!\left[ \left\| d_k - \mathbb{E}[d_k\mid \mathcal{F}_k] \right\|_2^2 \Bigm| \mathcal{F}_k \right] \label{eq:prf_lipf} \\
    &= L_h^2 L_f^2\, \operatorname{tr}\!\left(\operatorname{Cov}(d_k\mid \mathcal{F}_k)\right) \label{eq:trace_def} \\
    &\le L_h^2 L_f^2 \sigma_d^2.
\end{align}
Inequalities \eqref{eq:prf_liph} and \eqref{eq:prf_lipf} follow from the Lipschitz properties of $h$ and $f$ respectively. Equation \eqref{eq:trace_def} follows from the definition of the trace of the covariance matrix for a random vector. Thus inequality~\eqref{eq:ineq_variance} holds.
\end{proof}

To ensure that the variance bound in Proposition~\ref{prop:1} remains meaningful and to prevent the Lipschitz constant $L_h$ of the learned CBF $h_\phi$ from being excessively large during training, we employ the Lipschitz-bounded networks proposed in \cite{lipschitz_nn}. Consequently, the training process adheres to the user-defined upper bound on Lipschitz constant of $h_\phi$. Furthermore, to ensure that the output of $h_\phi$ remains bounded, we utilize a $\tanh$ activation function, which is 1-Lipschitz, in the final layer. Note that if the noise is bounded, a tighter upper bound can be derived on $\mathbb{E}[h(s_k)\mid \mathcal{F}_{k-1}] - h(s_k)$ as shown in Proposition~4 in the extended version of \cite{stochastic_discrete_cbf}.

Constructing $h_{\phi}$ with Lipschitz-bounded networks also helps circumvent the intractability of computing $\mathbb{E}\left[ h_{\phi}(\hat{s}_{\tau+1})| \hat{s}_{\tau}, a_\tau, e_\tau \right]$ in (\ref{eq:mpc_formulation}). In general, this term does not admit a closed-form expression due to the nonlinearity of the neural network $h_{\phi}$, and its estimation would require costly sampling within the MPC loop. The following proposition enables the derivation of a tractable lower bound on this term and facilitates the construction of a surrogate safety constraint for the CBF condition in (\ref{eq:mpc_formulation}):
\begin{proposition}
\label{prop:2}
    Suppose \(h:\mathcal{S}\to \mathbb{R}\) is globally Lipschitz with constant \(L_h \ge 0\). Then:
    \begin{equation}
    \label{eq:lower_bound}
    \begin{aligned}
        &\mathbb{E}\!\left[ h(s_{k+1} ) \mid \mathcal{F}_k \right] \geq h(\mathbb{E}\!\left[ s_{k+1} \mid \mathcal{F}_k\right]) \\ &\quad \quad \quad \quad \quad \quad \quad \quad \quad - L_h\sqrt{\operatorname{tr}\!\left(\operatorname{Cov}\left(s_{k+1}\mid \mathcal{F}_k\right) \right)}.
            \end{aligned}
    \end{equation}
\end{proposition}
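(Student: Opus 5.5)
Let $\mu := \mathbb{E}[s_{k+1}\mid \mathcal{F}_k]$, which is $\mathcal{F}_k$-measurable. The plan is to compare $h(s_{k+1})$ pointwise with the $\mathcal{F}_k$-measurable quantity $h(\mu)$ using the Lipschitz property. Then we take conditional expectations and control the resulting first-moment term by a second-moment term through conditional Jensen (or Cauchy--Schwarz). This is the same device used in the proof of Proposition~\ref{prop:1}, where the comparison point was $\tilde f(s_k,\mathbb{E}[d_k\mid\mathcal{F}_k])$.

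\textbf{Pointwise bound.} Global Lipschitzness of $h$ gives, almost surely,
\begin{equation*}
h(s_{k+1}) \ge h(\mu) - L_h \|s_{k+1}-\mu\|_2 .
\end{equation*}

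\textbf{Conditional expectation.} Taking $\mathbb{E}[\cdot\mid\mathcal{F}_k]$ of both sides, monotonicity and linearity of conditional expectation apply. Since $h(\mu)$ is $\mathcal{F}_k$-measurable ($h$ is continuous, hence Borel), it is pulled out, and we obtain
\begin{equation*}
\mathbb{E}[h(s_{k+1})\mid\mathcal{F}_k] \ge h(\mu) - L_h\,\mathbb{E}\bigl[\|s_{k+1}-\mu\|_2\bigm|\mathcal{F}_k\bigr].
\end{equation*}
Integrability holds by the standing assumption of the paper.

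\textbf{Replacing the first moment.} Apply conditional Jensen to the concave function $\sqrt{\cdot}$, or equivalently conditional Cauchy--Schwarz:
\begin{equation*}
\mathbb{E}\bigl[\|s_{k+1}-\mu\|_2\bigm|\mathcal{F}_k\bigr] \le \sqrt{\mathbb{E}\bigl[\|s_{k+1}-\mu\|_2^2\bigm|\mathcal{F}_k\bigr]} .
\end{equation*}
The right-hand side equals $\sqrt{\operatorname{tr}(\operatorname{Cov}(s_{k+1}\mid\mathcal{F}_k))}$, because the trace of the conditional covariance is the sum of the conditional variances of the coordinates, which is the conditional expected squared norm of the centered vector. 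This is the same identity used in \eqref{eq:trace_def}. Since $L_h\ge 0$, substituting this into the previous inequality yields \eqref{eq:lower_bound}.

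\textbf{Main obstacle.} The only delicate step is the conditional Jensen step, which needs $s_{k+1}\in\mathcal{L}^2$ conditionally so that the covariance is defined. I would invoke the paper's blanket integrability assumption here. If the conditional second moment is infinite, the bound holds trivially with the convention $\sqrt{\infty}=\infty$.
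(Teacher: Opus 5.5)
Your proposal is correct and follows essentially the same route as the paper: a Lipschitz comparison of $h(s_{k+1})$ with $h(\mathbb{E}[s_{k+1}\mid\mathcal{F}_k])$, then Jensen (the paper phrases it as nonnegativity of the conditional variance) to pass from the first to the second conditional moment, then the trace identity. The only cosmetic difference is that you use a one-sided pointwise bound, whereas the paper bounds the absolute difference $|\mathbb{E}[h(s_{k+1})\mid\mathcal{F}_k]-h(\mathbb{E}[s_{k+1}\mid\mathcal{F}_k])|$ and reads off the lower bound from that two-sided estimate.
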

\begin{proof}
        \begin{align}
            &|\mathbb{E}\!\left[ h(s_{k+1})\mid \mathcal{F}_k\right] - h(\mathbb{E}\!\left[s_{k+1}\mid \mathcal{F}_k\right])| \label{eq:prop2_1}\\
            & \quad \leq L_h \mathbb{E}\!\left[ \|s_{k+1} - \mathbb{E}\!\left[ s_{k+1} \mid \mathcal{F}_k\right] \|_2 \mid \mathcal{F}_k \right] \label{eq:prop2_2} \\
            & \quad \leq L_h \sqrt{\mathbb{E}\!\left[ \| s_{k+1} - \mathbb{E}\!\left[ s_{k+1} \mid \mathcal{F}_k\right] \|_2^2 \mid \mathcal{F}_k\right]} \label{eq:prop2_3} \\
            & \quad = L_h\sqrt{\operatorname{tr}\!\left(\operatorname{Cov}\left(s_{k+1}\mid \mathcal{F}_k\right) \right)}. \label{eq:prop2_4}
        \end{align}
    Inequalities (\ref{eq:prop2_2}) and (\ref{eq:prop2_3}) follow from the Lipschitz property of $h$ and the nonnegativity of the conditional variance, respectively. Hence, inequality (\ref{eq:lower_bound}) holds.     
\end{proof}
Using Proposition~\ref{prop:2}, we impose the following constraint for all $\tau =0, \dots, H-1$ in the MPC formulation, instead of the original CBF condition in ($\ref{eq:mpc_formulation}$):

\begin{align}
 h_{\phi}\left(\mathbb{E}\left[\hat{s}_{\tau+1}| \hat{s}_{\tau}, a_\tau, e_\tau \right]\right) - L_h&\sqrt{\operatorname{tr}\!\left(\operatorname{Cov}\left(\hat{s}_{\tau+1}\mid \hat{s}_{\tau}, a_\tau, e_{\tau}\right) \right)} \notag \\   & \quad \quad \quad \quad \quad    \geq \kappa h_{\phi}(\hat{s}_\tau). \label{eq:surrogate_safety}
 \end{align}
Although this constraint shrinks the original MPC feasibility set in ($\ref{eq:mpc_formulation}$), it avoids evaluating the intractable expectation.

The CBF $h_\phi$ is trained within our framework as follows. Let $\mathcal{D}^{+}$ and $\mathcal{D}^{-}$ denote the safe and unsafe data buffers, respectively, populated via the safety sensor outputs. Additionally, let $\mathcal{D}^{\mathrm{fea}}$ represent the dataset of visited states and system inputs output by the agent in these states. We train the CBF $h_\phi$ by minimizing the composite loss function:
\begin{equation}
\label{eq:cbf_loss}
    \mathcal{L}_{h_\phi} = \lambda_{1} \mathcal{L}_+ + \lambda_2 \mathcal{L}_- + \lambda_3\mathcal{L}_{\mathrm{fea}}.
\end{equation}
The terms $\mathcal{L}_{+}$ and $\mathcal{L}_{-}$ enforce the safety constraints, ensuring that CBF values for safe states exceed $\epsilon_{+} \geq 0$ and values for unsafe states remain below $-\epsilon_{-} \leq 0$:
\begin{equation}
\label{eq:epsilons}
\begin{aligned}
    \mathcal{L}_{+} &=  \frac{1}{|\mathcal{D}^{+}|}\sum_{s \in \mathcal{D}^{+}}  \left[-h_{\phi}\left(s \right) + \epsilon_{+}\right]^+, \\
    \mathcal{L}_{-} &=  \frac{1}{|\mathcal{D}^{-}|}\sum_{s \in \mathcal{D}^{-}} \left[ h_{\phi}\left(s \right) + \epsilon_{-}\right]^+
\end{aligned}
\end{equation}
where $[x]^+ := \max(0,x)$. Finally, $\mathcal{L}_{\mathrm{fea}}$ promotes the CBF feasibility condition (\ref{eq:surrogate_safety}) using the learned system model: 
\begin{equation}
\begin{aligned}
\mathcal{L}_{\mathrm{fea}} = \frac{1}{ E|\mathcal{D}^{\mathrm{fea}}|} \sum_{e=1}^E\sum_{(s,a) \in \mathcal{D}^{\mathrm{fea}}} \Bigl[- h_\phi(\mu_e(s,a)) + \kappa h_\phi(s) \\[-1pt] + L_h\sqrt{\operatorname{tr}(\Sigma_e(s,a))} +\epsilon_{\mathrm{fea}}\Bigr]^+. 
\end{aligned}\end{equation}
Here, $\mu_e(s, a)$ and $\Sigma_e(s,a)$ denote the mean and covariance matrix of the $e$-th PNN in an ensemble of size $E$.

\subsection{Learning System Dynamics and Safe Trajectory Sampling}
\label{sec:traj_desc}
We use an ensemble of bootstrapped PNNs, following the method proposed in~\cite{pets_paper}, to learn both the system dynamics and the reward function in~(\ref{eq:mpc_formulation}). Each PNN in the ensemble outputs a Gaussian distribution over the next state and reward, conditioned on the current state and system input.

In our sampling-based MPC solver, we employ the input filtering technique utilized in~\cite{pineda2021mbrl} to generate smooth candidate input sequences. Let $a^{\star}_{0:H-1}$ denote the MPC solution from the previous time step ($a^{\star}_{0:H-1}=0$ for the first MPC call). We initialize the mean of the input sequence sampling distribution $\bar{a}_{0:H-1}$ as $\bar a_{\tau}=a^{\star}_{\tau+1}$ for $\tau=0,\dots,H-2$ and $\bar a_{H-1}=\bar a_{H-2}$, and set $a_{\mathrm{init}}=a^{\star}_{0}$. Let $N$ be the batch size, $\beta \in [0,1]$ the filtering coefficient, and $\Sigma$ the action-noise covariance. We generate the input sequences $\forall i \in \{0, \dots, N-1\},\: \tau \in \{0,\dots, H-1\}$ as follows:
\begin{equation}
\label{eq:action_gen}
    \begin{aligned}
    n^i_\tau &\sim \mathcal{N}(\bar{a}_\tau, \Sigma),\\
    a^i_\tau &= \beta n^i_\tau + (1 - \beta) a^i_{\tau-1}, \, \text{where} \quad a^i_{-1}= a_{\mathrm{init}}.
    \end{aligned}
\end{equation}
We do not use the standard MPPI weighted average as the final control output because a weighted average of safe input sequences is not necessarily safe due to the potential non-convexity of the safe set. Instead, we use the MPPI rule to refine the control distribution using the estimated cumulative rewards $\hat{R}_i$. $\hat{R}_i$ is computed as the average of the predicted cumulative rewards over all particles propagated under input sequence $a_{0:H-1}^i$ during the MPC rollout. The mean input sequence is updated as
\begin{equation}
    \bar{a}_\tau = \frac{\sum_{i=0}^{N-1} e^{\gamma \hat{R}_i}a_\tau^i}{\sum_{i=0}^{N-1}e^{\gamma \hat{R}_i}} \quad \forall \tau \in \{0, \dots, H-1\}
\end{equation}
where $\gamma$ is the reward-scaling parameter. We then sample a new batch of input sequences around the updated mean $\bar{a}_{0:H-1}$ using (\ref{eq:action_gen}), setting $a_{\mathrm{init}} =\bar{a}_0$ prior to resampling, and select the single sequence yielding the maximum predicted cumulative reward as the optimizer's output.

Our safe trajectory sampling method for a given system input sequence $a_{0:H-1}^i$ and an initial state $\hat{s}_0$ proceeds as follows. We first generate identical initial states $\hat{s}_{0}^{i,p}$ from $\hat{s}_0$, for each particle $p = 0, \dots, P-1$. We then propagate each particle utilizing the PNNs from the model ensemble, i.e., $\hat{s}_{\tau+1}^{i,p} \!\sim \! \mathcal{N}(\mu_{e_\tau^{i,p}}(\hat{s}_{\tau}^{i,p},a_{\tau}^i), \Sigma_{e_\tau^{i,p}}(\hat{s}_{\tau}^{i,p},a_{\tau}^i))$. The bootstrap index $e_\tau^{i,p}$ is selected uniformly and independently for each particle at each rollout step. At each rollout step, we check whether each particle satisfies the surrogate safety condition in~(\ref{eq:surrogate_safety}) using the assigned PNN to that particle.

For each particle $p$, we verify whether the following inequality remains nonnegative for all $\tau = 0, \dots, H-1$:
\begin{equation}
     \label{eq:cbf_cond}
     h_{\phi}\!\left(\mu_{e_\tau^{i,p}}(\hat{s}^{i,p}_{\tau}, a_\tau^i)\right) - \kappa\, h_{\phi}(\hat{s}^{i,p}_{\tau}) - L_h \sqrt{\operatorname{tr}\left(\Sigma_{e_\tau^{i,p}}\left(\hat{s}^{i,p}_{\tau}, a_\tau^i\right)\right)}.
\end{equation}
If this condition is violated for any particle at any rollout step $\tau$, the corresponding input sequence $a_{0:H-1}^i$ is identified as unsafe. During the optimization process, when an input sequence is deemed unsafe at a rollout step $\tau < H$, the inputs up to $\tau$, i.e., $a_{0:\tau}$, are replaced with a randomly selected safe input prefix from the input sequence batch.  The propagated particle states and the cumulative reward are then updated up to the rollout step $\tau$ to remain consistent with the modified input sequence. The rollout subsequently continues from step $\tau+1$ using the remaining inputs. The remaining inputs beyond the violation step are not replaced, in order to preserve the diversity of the input sequence batch. If another safety violation occurs at a later rollout step, the same replacement procedure is applied. The replacement of unsafe input sequence prefixes with safe prefixes is analogous to the rewiring step in Resampling-Based Rollouts~\cite{YinJ-RSS-25}.

\begin{figure}[!h]
    \centering
    \includegraphics[width=0.95\linewidth]{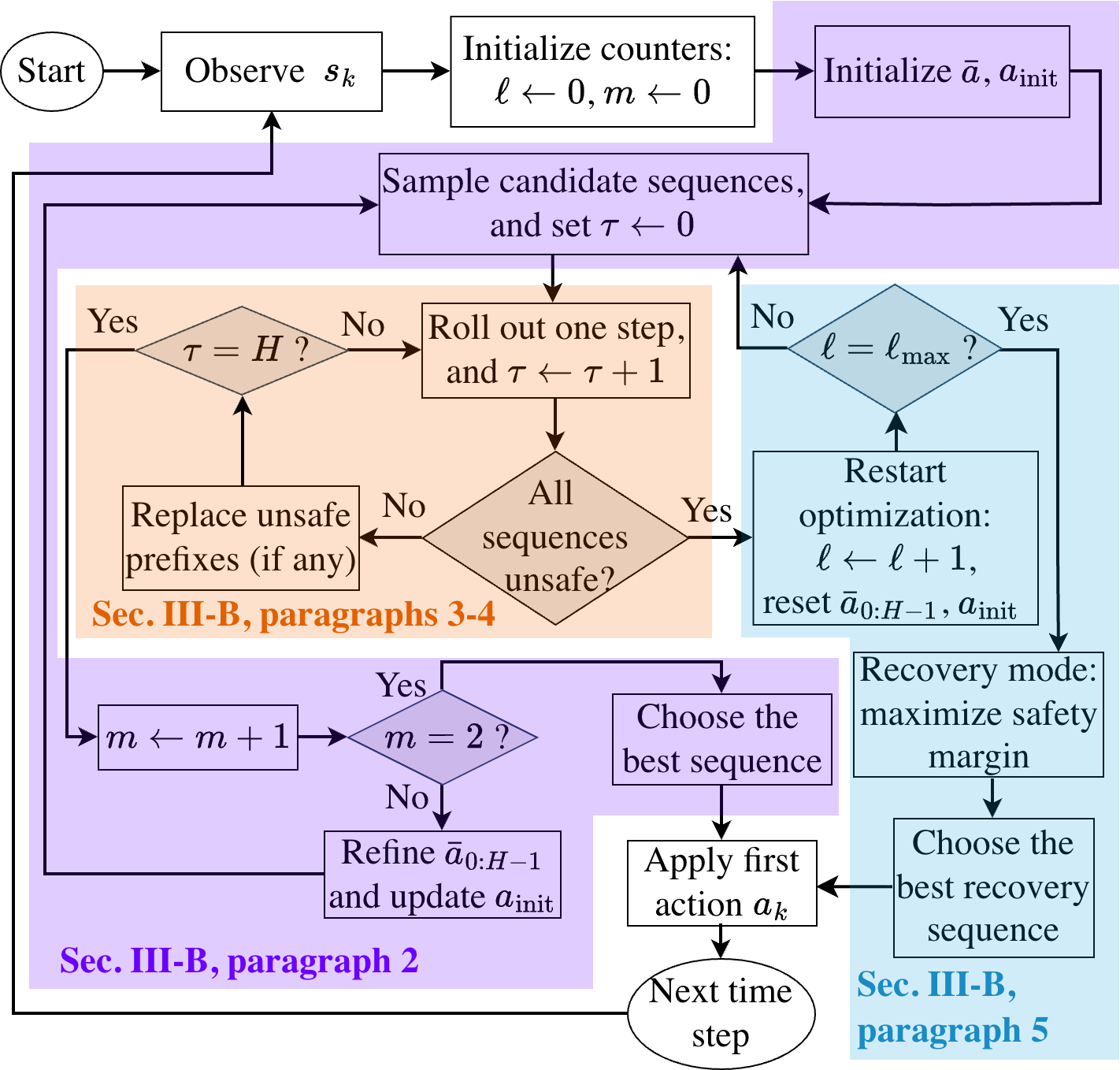}
    \caption{High-level flow of the optimization in PECTS. Candidate input sequences are rolled out under the CBF constraint, with unsafe prefixes replaced by safe ones. If all sequences become unsafe, the optimizer is restarted at the current stage; after at most $\ell_{\max}=5$ attempts, it switches to recovery mode. Here, $\ell$ is the reoptimization-attempt counter, and $m$ is the optimization-stage counter, indicating whether the solver is in the refinement or the final stage. Color coding marks the main branches: purple for nominal optimization, orange for unsafety handling, and blue for recovery mode.}
    \label{fig:optimizer_flow}
\end{figure}

If all input sequences in the batch are identified as unsafe at rollout step $\tau$, we reset $\bar{a}_{0:H-1} =0$ and $a_{\mathrm{init}}=0$, and restart optimization. We repeat this procedure for a fixed number of attempts (set to $5$ in our implementation). Once this limit is exceeded, the algorithm enters a recovery mode, under the assumption that no feasible input sequence satisfies the hard CBF constraints. In this mode, the following MPC problem is solved without enforcing hard CBF constraints:
    
\begin{align}
        &\max_{a_{0:H-1}} \; \sum_{\tau = 0}^{H-1} \frac{1}{\tau + 1}\mathbb{E}\Big[ h_{\phi}\left(\mathbb{E}\left[\hat{s}_{\tau+1}| \hat{s}_{\tau}, a_\tau, e_\tau\right]\right)  - \kappa h_{\phi}(\hat{s}_\tau) \notag \\ & \quad \quad \quad \quad \quad \quad \quad \quad \quad   - L_h\sqrt{\operatorname{tr}\!\left(\operatorname{Cov}\left(\hat{s}_{\tau+1}\mid \hat{s}_{\tau}, a_\tau, e_\tau\right) \right)} \Big]. \notag
        \\
        &e_\tau \sim \mathrm{Unif}\!\left(\{1,\dots,E\}\right), \quad \forall \tau =0, \dots, H-1, \label{eq:soft_mpc_formulation}\\
        &\hat{s}_{\tau+1} \sim f_{\theta, e_\tau}( \cdot | \hat{s}_\tau, a_\tau), \quad \forall \tau =0, \dots, H-1,
        \notag \\
         &a_\tau \in \mathcal{A}, \quad \forall \tau =0, \dots, H-1, \notag\\
         &\hat{s}_0 = s_{k}.\notag
         \end{align}
The recovery mode ignores task reward and maximizes a safety-margin objective to regain feasibility. An overview of our overall optimization procedure is shown in Figure~\ref{fig:optimizer_flow}.

\section{Experiments}
\label{sec:experiments}
We evaluated PECTS on two different tasks using three system models: a unicycle model, an Ackermann car model, and a 2D double integrator. Across all experiments, we set $\kappa=0.95$ and $L_h = 1$ in~(\ref{eq:cbf_cond}). We chose an MPC horizon of $H=25$ for the unicycle and 2D double integrator, and $H=40$ for the Ackermann car.  All models were discretized using Euler's method with $dt=0.02$, and Gaussian noise was added to model system uncertainty. Each model satisfies the assumptions in the problem formulation, including continuity of the dynamics and unimodality of the next-state distribution.  The specific dynamics for each model are described below.

\noindent\textbf{Unicycle:}
The system state comprises the robot's xy-position and the heading, i.e., $
s_k = \begin{bmatrix} x_k & y_k & \theta_k \end{bmatrix}^\top $. The control input comprises the normalized linear and angular velocities: $a_k = \begin{bmatrix} \bar{v}_k & \bar{\omega}_k \end{bmatrix}^\top \in [-1,1]^2 .$ The system dynamics are given by
\begin{equation}
\begin{aligned}
&s_{k+1} = s_k +
\begin{bmatrix}
v_{\max} \bar{v}_k \cos\theta_k \\
v_{\max} \bar{v}_k \sin\theta_k \\
\omega_{\max}  \bar{\omega}_k
\end{bmatrix} dt + d_k, \\
 &v_{\max} = 1.5 \, \text{m/s}, \quad \omega_{\max} = \pi \, \text{rad/s},\\
    &d_k \sim \mathcal{N}(0,\Sigma), \quad
\Sigma = dt^2\mathrm{diag}\!\left(0.03^2,0.03^2,0.05^2\right).\\ 
\end{aligned}
\end{equation}

\noindent\textbf{Ackermann car:} The system state is defined by the xy-position of the robot, heading of the robot, $\theta_k$,  and the steering angle, $\psi_k$, i.e., $s_k = \begin{bmatrix}
    x_k & y_k & \theta_k & \psi_k
\end{bmatrix}^\top$. The control input consists of the normalized linear velocity of the robot and the normalized steering angular velocity: $a_k = \begin{bmatrix} \bar{v}_k & \bar{\gamma}_k \end{bmatrix}^\top \in [-1,1]^2 .$ The system dynamics are given by
\begin{align}
&s_{k+1} = s_k +
\begin{bmatrix}
v_{\max} \bar{v}_k  \cos\theta_k \\
v_{\max} \bar{v}_k  \sin\theta_k \\
\frac{v_{\max} \bar{v}_k }{L}\tan(\psi_k) \\
\gamma_{\max} \bar{\gamma}_k
\end{bmatrix} dt + d_k,\\
    &v_{\max} = 1.5 \, \text{m/s}, \quad \gamma_{\max} = 2.0 \, \text{rad/s}, \quad L = 0.2 \, \text{m}, \notag \\
    &d_k \sim \mathcal{N}(0,\Sigma), \quad \Sigma = dt ^2 \mathrm{diag}\!\left
    (0.03^2,0.03^2,0.05^2, 0.01^2\right). \notag
\end{align}
The steering angle is clipped to satisfy $\psi_k \in [-0.7, 0.7]$ after each state update to ensure physical realism in the simulation.

\noindent\textbf{2D double integrator:} The system state is composed of the robot's xy-position and its velocity in the x and y direction, i.e., $s_k = \begin{bmatrix}
    x_k & y_k & v^x_k & v^y_k
\end{bmatrix}^\top$. The control input is the normalized accelerations in x and y  directions: $a_k = \begin{bmatrix}
        \bar{\alpha}_k^x & \bar{\alpha}_k^y
    \end{bmatrix}^\top \in [-1,1]^2 .$ The system dynamics are given by
\begin{align}
    &s_{k+1} = s_k + \begin{bmatrix}
        v^x_k \\ v^y_k \\ \alpha^x_{\max}\bar{\alpha}_k^x \\  \alpha^y_{\max}\bar{\alpha}_k^y
    \end{bmatrix} dt + d_k,\\
    &\alpha^x_{\max} = \alpha^y_{\max} = 3\, \text{m/}\text{s}^2, \notag \\
    & d_k \sim \mathcal{N}(0, \Sigma), \quad \Sigma = dt^2 \mathrm{diag}\!\left(0.03^2, 0.03^2, 0.1^2, 0.1^2\right). \notag
\end{align}
The robot speed is clipped to $1.5\, \text{m/s}$ after each state update.

\begin{figure}[!h]
    \centering
      \begin{subfigure}{0.4\textwidth}
        \centering
        \includegraphics[width=\linewidth]{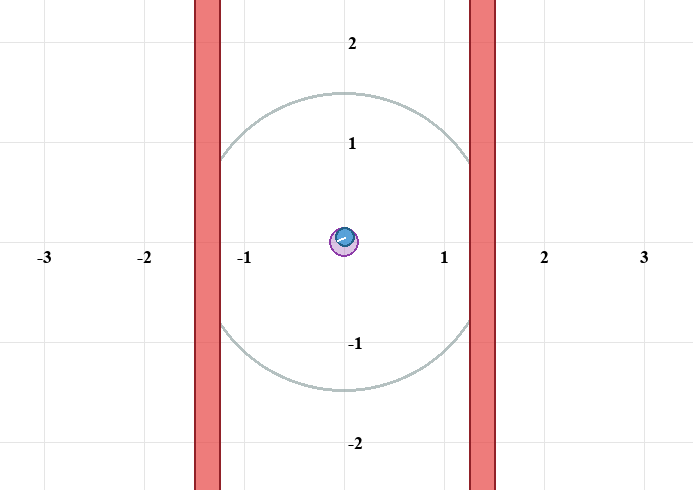}
        \caption{Circular path-following task. The gray circle (radius 1.5\,m) denotes the target circular path that yields the highest reward.}
        \label{fig:sub1}
      \end{subfigure}%
      \hfill
      \begin{subfigure}{0.4\textwidth}
        \centering
        \includegraphics[width=\linewidth]{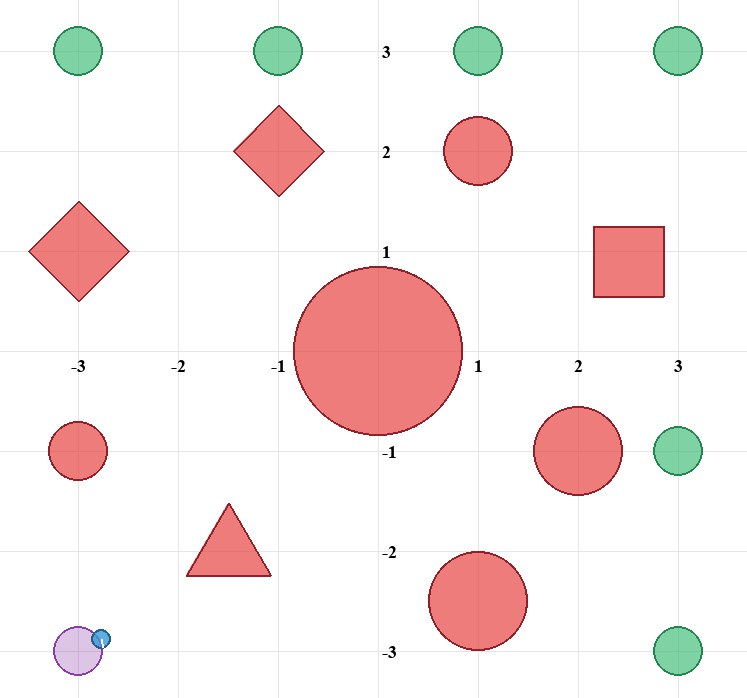}
        \caption{Goal-reaching task. At the start of each episode, a goal location is randomly selected from the set of green circles in the figure.}
        \label{fig:sub2}
      \end{subfigure}
    
      \caption{Task environments used in simulation studies. In both environments, the robot (blue circle) is randomly initialized within the purple region. Red regions indicate solid obstacles that the robot must avoid.}
      \label{fig:env}
      \vspace*{-0.1cm}
\end{figure}

Our tasks include a goal-reaching and a circular path-following task. In both, the robot is modeled as a circle of radius 0.1 m, and must avoid collisions with obstacles. Collision states are labeled unsafe in the simulation. Figure~\ref{fig:env} illustrates both environments. In the goal-reaching task, the goal is randomly selected from six circular goal regions of radius 0.25 m. The reward is the decrease in distance to the goal. In this task, the unit vector pointing toward the goal in the robot’s frame and $\min(d/8,\, 1)$, where $d$ is the distance to the goal, are appended to the state vector. An episode terminates when the robot reaches the goal or after $2000$ time steps.

The circular path-following task is inspired by the circle task in \cite{cpo}. In this task, the robot receives the maximum reward when following a circular path with a radius of 1.5 m, centered at the origin, in a counterclockwise direction at the highest speed permitted by its dynamics. Obstacle columns at $x = 1.25$ and $x = -1.25$ prevent the robot from crossing these lines. The reward is defined as
\begin{equation}
    r(s_k,a_k, d_k) = \frac{-v^x_{k+1} y_{k+1} + v^y_{k+1} x_{k+1}}{\left(1 + |\rho_{k+1}^{\text{robot}} - 1.5|\right) \rho_{k+1}^{\text{robot}}}
\end{equation}
where $\rho_{k+1}^{\text{robot}}$ denotes the distance of the robot from the origin at time step $k+1$, and $v^x_{k+1}$ and $v^y_{k+1}$ denote the robot's velocities along the $x$ and $y$ directions at time step $k+1$, respectively. An episode lasts $1000$ time steps in this task.

We employ a 36-beam LiDAR with a maximum range of 5 meters to implement the safety sensor that detects the safe and unsafe states in the robot's vicinity. The safety sensor first transforms the LiDAR point cloud into the global frame. For first-order systems, the safety sensor checks intersections between the point cloud and the robot’s collision body at sampled states in the LiDAR sensing horizon. If a collision occurs at a sampled state, it is identified as unsafe; otherwise, it is identified as safe. For the 2D double integrator, the safety sensor accounts for the velocity components of the sampled states relative to each LiDAR hit point. For each sampled state and each LiDAR hit point, it computes the velocity component along the direction from the state to that point and inflates the collision body in proportion to the square of this approaching component. If the robot is moving away (negative projection), no inflation is applied. It then checks for intersections between the point cloud and this per-(state, point) inflated collision body at the sampled states.

We compare PECTS with state-of-the-art safe RL methods, including PPO-Lag~\cite{safety_gym}, CPO~\cite{cpo}, and CUP~\cite{cup}. Since PECTS is an MPC-based method, we also include an MPC-based baseline that augments PETS~\cite{pets_paper} with a learned neural safety classifier trained on safety-sensor outputs, denoted as PETS+SC. Specifically, we incorporate the classifier output into the PETS MPC objective by adding a large negative penalty (-1000) for unsafe states. This penalty dominates the per-step reward scale in all tasks and effectively acts as a hard state constraint. We train the model-based methods (PECTS and PETS+SC) for 500 episodes, compared to 5000 for the other baselines, due to their higher sample efficiency.

The performance of the trained policies is compared in Table~\ref{tab:performances}. As shown, PECTS is the only method that satisfies the strict safety constraints across all experiments. To identify a strictly safe policy with the best performance, we define the best-performing method as the one with the highest success rate or average episode reward among the safest methods. Under this criterion, PECTS performs the best in 4 out of 6 tasks. While some baselines can exceed PECTS on some tasks when they are safe, they violate constraints in at least one other setting, and therefore do not provide a uniformly feasible solution under strict safety requirements.

\begin{figure}[!h]
\vspace*{-0.15cm}
    \centering
    \includegraphics[width=0.92\linewidth]{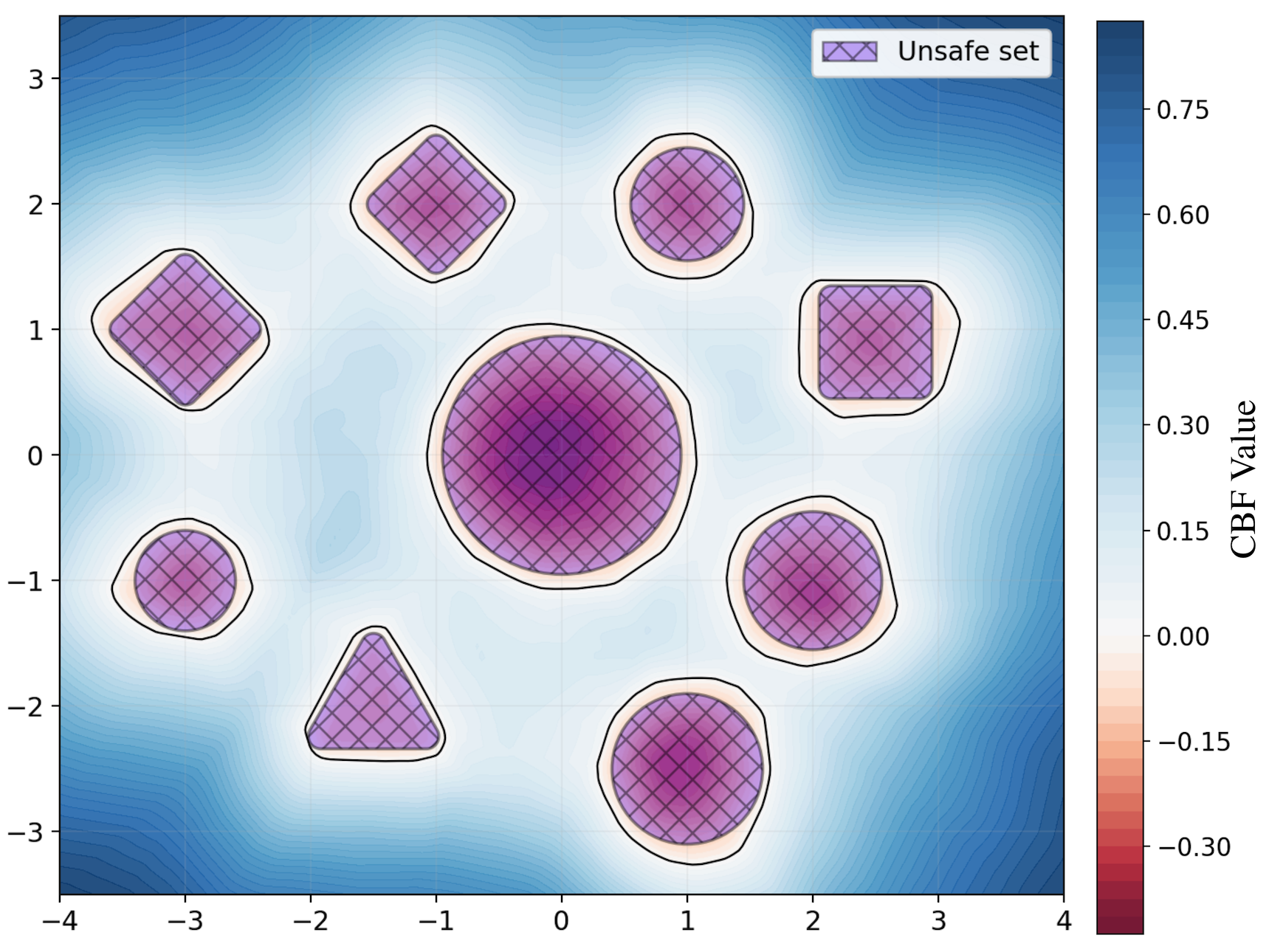}
    \caption{Trained CBF for a unicycle in the goal-reaching environment. The purple hatched regions indicate the unsafe states (obstacle inflated by the robot radius) in the environment. The regions enclosed by the black contour correspond to the unsafe set predicted by the trained CBF.}
    \label{fig:cbfs}
\end{figure}

\begin{figure}[!h]
    \centering
    \includegraphics[width=0.87\linewidth]{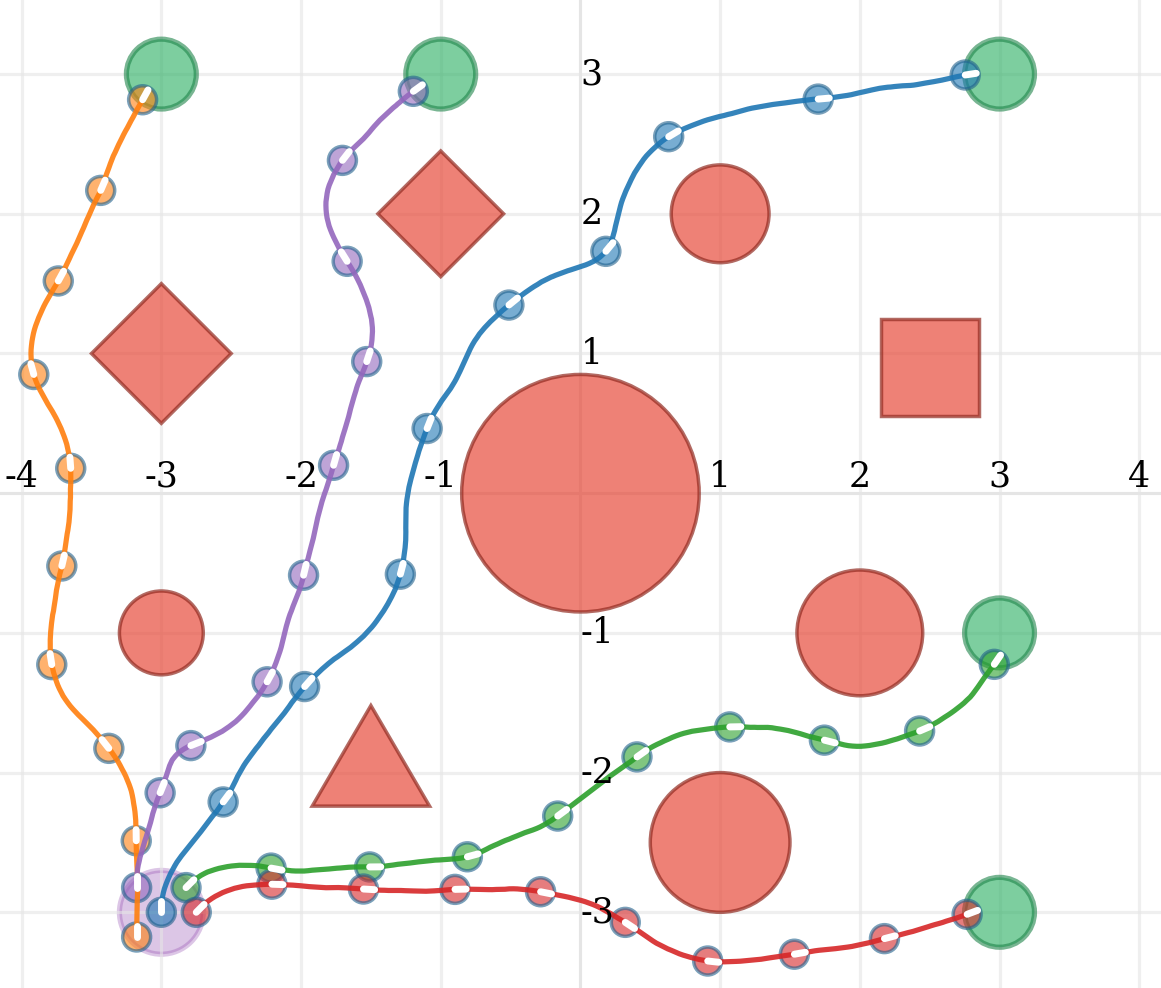}
    \caption{Examples of unicycle paths in the goal-reaching task.}
    \label{fig:unicycle_goal_reaching}
\end{figure}

\begin{table*}[!ht]
\centering
\caption{Performance of the trained policies. \emph{Ep. Rew.} denotes the average episode reward return (mean $\pm$ std), \emph{Success (\%)} the task completion rate, and \emph{Safe (\%)} the percentage of constraint-satisfying episodes. For each method and each task, five policies are trained with different random seeds. Each policy is evaluated on the same 200 episodes for its task, and the reported metrics summarize the overall statistics. For each task, the best result is determined by first selecting methods with the highest Safe (\%), and then choosing the one with the highest Ep. Reward (or Success (\%)). The best result is highlighted in bold. }
\resizebox{\textwidth}{!}{%
\label{tab:performances}
\scalebox{1.0}{
\begin{tabular}{@{}lcccccccccccc@{}}
\toprule
\multirow{3}{*}{} & \multicolumn{4}{c}{\textbf{Unicycle}} & \multicolumn{4}{c}{\textbf{Ackermann Car}} & \multicolumn{4}{c}{\textbf{2D Double Integrator}} \\ \cmidrule(lr){2-5} \cmidrule(lr){6-9} \cmidrule(l){10-13}
 & \multicolumn{2}{c}{Circle} & \multicolumn{2}{c}{Goal Reaching} & \multicolumn{2}{c}{Circle} &  \multicolumn{2}{c}{Goal Reaching} & \multicolumn{2}{c}{Circle} & \multicolumn{2}{c}{Goal Reaching} \\ \cmidrule(lr){2-3} \cmidrule(lr){4-5} \cmidrule(lr){6-7} \cmidrule(lr){8-9} \cmidrule(lr){10-11} \cmidrule(lr){12-13}
 & Ep. Rew. & Safe (\%) & Success (\%) & Safe (\%) & Ep. Rew. & Safe (\%) & Success (\%) & Safe (\%) & Ep. Rew. & Safe (\%) & Success (\%) & Safe (\%) \\ \midrule
PECTS     & ${1195}{\scriptstyle {\pm 45}}$ & $100$& $\textbf{99.9}$ & $\textbf{100}$ &  ${\textbf{1154}}{\scriptstyle \pm \textbf{16}}$ & $\textbf{100}$ & $\textbf{98.0}$ & $\textbf{100}$ & ${1107}{\scriptstyle \pm 15}$ & $100$ & $\textbf{100}$ & $\textbf{100}$ \\
PETS + SC & $\textbf{{1275}}{\scriptstyle \pm \textbf{10}}$ &$\textbf{100}$& $90.8$ & $100$ & ${1239}{\scriptstyle \pm 104}$ & $89.5$ & $77.4$ & $100$ & ${1149}{\scriptstyle \pm 10}$ & $100$ & $99.5$ & $100$\\
PPO-Lag    & $1234{\scriptstyle \pm 125}$ & $97.9$ & $89.8$ & $94.5$ & ${1262}{\scriptstyle \pm 48}$ & $94$ & $3.4$ & $84.8$ & ${\textbf{1270}}{\scriptstyle \pm \textbf{3}}$ & $\textbf{100}$ & $99.3$ & $86.4$ \\
CPO       & $528{\scriptstyle \pm 285}$ & $94.7$ & $44.2$ & $99.7$ & ${266}{\scriptstyle \pm 218}$ & $100$ & $0.3$ & $99.5$ & ${289}{\scriptstyle \pm 400}$ & $99.9$ & $100$ & $98.4$\\
CUP       &  ${1137}{\scriptstyle \pm 246}$ & $76.6$ & $84.1$ & $91$ & ${1267}{\scriptstyle \pm 22}$ & $85.3$  & $4.5$ & $82.6$  & ${1240}{\scriptstyle \pm 42}$ & $80$ & $98$ & $81.4$ \\ \bottomrule
\end{tabular}%
}
}
\vspace*{-0.3cm}
\end{table*}

Figure~\ref{fig:cbfs} demonstrates a trained CBF for the goal-reaching unicycle environment. The CBF accurately predicts all unsafe states as unsafe, which is critical for a safety-critical task. The boundary is mildly conservative, leaving a safety margin around obstacles, driven by the limited coverage of boundary states under the RL setting and by our safety-oriented hyperparameter choices for ($2\lambda_1 = \lambda_2 = 2$ in (\ref{eq:cbf_loss}) and $\epsilon_+ <\epsilon_-$ in (\ref{eq:epsilons})). Beyond classification performance, another important consideration in CBF learning is feasibility, namely, whether there exists a control input that satisfies the CBF condition for all states in the domain of interest. Although infeasibility arises in early training episodes of PECTS, triggering recovery mode, it becomes increasingly rare as training progresses, and we did not observe infeasibility during the final evaluation of any task. This provides evidence consistent with the learned CBFs being feasible over the task-relevant subset of the state space. Figure~\ref{fig:unicycle_goal_reaching} shows goal-reaching trajectories executed by the PECTS-controlled unicycle using the CBF in Figure~\ref{fig:cbfs}. As shown in Figure~\ref{fig:unicycle_goal_reaching}, the robot consistently routes around obstacles while progressing toward the goal.

\section{Conclusion}
In this paper, we introduced PECTS, an MPC-based safe RL approach that enforces probabilistic safety under model uncertainty. PECTS jointly learns stochastic dynamics using PNNs and CBFs using Lipschitz-bounded neural networks, and integrates the learned CBF constraints into an MPC formulation. Across a range of simulated safety-critical tasks and robot models, PECTS achieved stronger safety performance while maintaining competitive control performance relative to baseline methods. One limitation of PECTS is its reliance on complex safety sensor implementations for high-order systems. Future work will focus on enriching the framework with high-order CBFs to address this limitation and on validating PECTS in real-world robot experiments.

\bibliographystyle{IEEEtran}
\bibliography{references}

@ARTICLE{safe_rl_review,
  author={Gu, Shangding and Yang, Long and Du, Yali and Chen, Guang and Walter, Florian and Wang, Jun and Knoll, Alois},
  journal={{IEEE} Transactions on Pattern Analysis and Machine Intelligence}, 
  title={A Review of Safe Reinforcement Learning: Methods, Theories, and Applications}, 
  year={2024},
  volume={46},
  number={12},
  pages={11216-11235}}

@inproceedings{discrete_cbf,
  title={Discrete control barrier functions for safety-critical control of discrete systems with application to bipedal robot navigation.},
  author={Agrawal, Ayush and Sreenath, Koushil},
  booktitle={Proc. Robotics: Science and Systems},
  volume={13},
  pages={1--10},
  year={2017},
  month={July},
  address={Cambridge, MA, US}
}

@inproceedings{discrete_cbf_mpc,
  title={Safety-critical model predictive control with discrete-time control barrier function},
  author={Zeng, Jun and Zhang, Bike and Sreenath, Koushil},
  booktitle={Proc. American Control Conference},
  pages={3882--3889},
  year={2021},
  month={May},
  address={New Orleans, LA, US}
}

@article{stochastic_discrete_cbf,
  title={Bounding stochastic safety: Leveraging freedman’s inequality with discrete-time control barrier functions},
  author={Cosner, Ryan K and Culbertson, Preston and Ames, Aaron D},
  journal={{IEEE} Control Systems Letters},
  volume={8},
  pages={1937--1942},
  year={2024},
  note={Extended version available at arXiv:2403.05745}
}

@inproceedings{pets_paper,
 author = {Chua, Kurtland and Calandra, Roberto and McAllister, Rowan and Levine, Sergey},
 booktitle = {Proc. Advances in Neural Information Processing Systems},
 title = {Deep Reinforcement Learning in a Handful of Trials using Probabilistic Dynamics Models},
 pages = {4759–4770},
 volume = {31},
 year = {2018},
 month = {December},
 address = {Montreal, QC, Canada}
}

@article{pineda2021mbrl,
  title={{MBRL-Lib}: A Modular Library for Model-based Reinforcement Learning},
  author={Pineda, Luis and Amos, Brandon and Zhang, Amy and Lambert, Nathan O and Calandra, Roberto},
  journal={arXiv preprint arXiv:2104.10159},
  year={2021}
}

@ARTICLE{model_free_rl_cbf,
  author={Yang, Yujie and Jiang, Yuxuan and Liu, Yichen and Chen, Jianyu and Li, Shengbo Eben},
  journal={{IEEE} Robotics and Automation Letters}, 
  title={Model-Free Safe Reinforcement Learning Through Neural Barrier Certificate}, 
  year={2023},
  volume={8},
  number={3},
  pages={1295-1302}}

@ARTICLE{emam_safe_rl,
  author={Emam, Yousef and Notomista, Gennaro and Glotfelter, Paul and Kira, Zsolt and Egerstedt, Magnus},
  journal={{IEEE} Robotics and Automation Letters}, 
  title={Safe Reinforcement Learning Using Robust Control Barrier Functions}, 
  year={2022},
  volume={10},
  number={3},
  pages={2886-2893}}

@INPROCEEDINGS{airaldi2025probabilistically,
  author={Airaldi, Filippo and Schutter, Bart De and Dabiri, Azita},
  booktitle={Proc. Conference on Decision and Control}, 
  title={Probabilistically safe and efficient model-based reinforcement learning}, 
  year={2025},
  month={December},
  volume={},
  number={},
  pages={5853-5860},
  address={Rio de Janeiro, Brazil}
  }

@InProceedings{pmlr-v202-wang23as,
  title = 	 {Enforcing Hard Constraints with Soft Barriers: Safe Reinforcement Learning in Unknown Stochastic Environments},
  author =       {Wang, Yixuan and Zhan, Simon Sinong and Jiao, Ruochen and Wang, Zhilu and Jin, Wanxin and others},
  booktitle = 	 {Proc. International Conference on Machine Learning},
  pages = 	 {36593--36604},
  year = 	 {2023},
  volume = 	 {202},
  month = 	 {July},
  address={Honolulu, HI},
}

@INPROCEEDINGS{sabouni_safe_rl,
  author={Sabouni, Ehsan and Sabbir Ahmad, H.M. and Giammarino, Vittorio and Cassandras, Christos G. and Paschalidis, Ioannis Ch. and Li, Wenchao},
  booktitle={Proc. Conference on Decision and Control},
  title={Reinforcement Learning-based Receding Horizon Control using Adaptive Control Barrier Functions for Safety-Critical Systems},
  month={December},
  year={2024},
  pages={401-406},
  address={Milan, Italy}
  }

@inproceedings{NEURIPS2021_d71fa38b,
 author = {Luo, Yuping and Ma, Tengyu},
 booktitle = {Proc. Advances in Neural Information Processing Systems},
 pages = {25621--25632},
 title = {Learning Barrier Certificates: Towards Safe Reinforcement Learning with Zero Training-time Violations},
 volume = {34},
 year = {2021},
 month={December},
 address={}
}

@INPROCEEDINGS{model_free_rl_clbf_new,
  author={Du, Desong and Han, Shaohang and Qi, Naiming and Ammar, Haitham Bou and Wang, Jun and Pan, Wei},
  booktitle={Proc. International Conference on Robotics and Automation}, 
  title={Reinforcement Learning for Safe Robot Control using Control Lyapunov Barrier Functions}, 
  year={2023},
  pages={9442-9448},
  month={May},
  address={London, UK}}

@INPROCEEDINGS{bolun_cbf_learning,
  author={Dai, Bolun and Krishnamurthy, Prashanth and Khorrami, Farshad},
  booktitle={Proc. Conference on Decision and Control}, 
  title={Learning a Better Control Barrier Function}, 
  year={2022},
  month={December},
  address={Cancun, Mexico},
  volume={},
  number={},
  pages={945-950},}

@INPROCEEDINGS{robey_cbf_learning,
  author={Robey, Alexander and Hu, Haimin and Lindemann, Lars and Zhang, Hanwen and Dimarogonas, Dimos V. and Tu, Stephen and Matni, Nikolai},
  booktitle={Proc. Conference on Decision and Control}, 
  title={Learning Control Barrier Functions from Expert Demonstrations}, 
  year={2020},
  month={December},
  volume={},
  number={},
  pages={3717-3724},
  address={Jeju, Korea}
  }

@InProceedings{lipschitz_nn,
  title = 	 {Direct Parameterization of {L}ipschitz-Bounded Deep Networks},
  author =       {Wang, Ruigang and Manchester, Ian},
  booktitle = 	 {Proc. International Conference on Machine Learning},
  pages = 	 {36093--36110},
  year = 	 {2023},
  month = 	 {July},
  volume = 	 {202},
  address={Honolulu, HI},
}

@INPROCEEDINGS{song_safe_rl,
  author={Song, Lixing and Ferderer, Luke and Wu, Shaoen},
  booktitle={Proc. International Conference on Machine Learning and Applications}, 
  title={Safe Reinforcement Learning for LiDAR-based Navigation via Control Barrier Function}, 
  year={2022},
  month={December},
  volume={},
  number={},
  pages={264-269},
  address={Nassau, Bahamas}}

@InProceedings{PDDM,
  title = 	 {Deep Dynamics Models for Learning Dexterous Manipulation},
  author =       {Nagabandi, Anusha and Konolige, Kurt and Levine, Sergey and Kumar, Vikash},
  booktitle = 	 {Proc. Conference on Robot Learning},
  pages = 	 {1101--1112},
  year = 	 {2020},
  volume = 	 {100},
  month = 	 {October},
}

@ARTICLE{rl_review,
  author={Arulkumaran, Kai and Deisenroth, Marc Peter and Brundage, Miles and Bharath, Anil Anthony},
  journal={{IEEE} Signal Processing Magazine}, 
  title={Deep Reinforcement Learning: A Brief Survey}, 
  year={2017},
  volume={34},
  number={6},
  pages={26-38}}

@InProceedings{cpo,
  title = 	 {Constrained Policy Optimization},
  author =       {Joshua Achiam and David Held and Aviv Tamar and Pieter Abbeel},
  booktitle = 	 {Proc. International Conference on Machine Learning},
  pages = 	 {22--31},
  year = 	 {2017},
  volume = 	 {70},
  month = 	 {August},
  address= {Sydney, Australia}
}

@misc{safety_gym,
  title        = {Benchmarking Safe Exploration in Deep Reinforcement Learning},
  author       = {Ray, Alex and Achiam, Joshua and Amodei, Dario},
  year         = {2019},
  howpublished = {Preprint},
  url          = {https://cdn.openai.com/safexp-short.pdf}
}

@inproceedings{cup,
 author = {Yang, Long and Ji, Jiaming and Dai, Juntao and Zhang, Linrui and Zhou, Binbin and others},
 booktitle = {Proc. Advances in Neural Information Processing Systems},
 pages = {9111--9124},
 title = {Constrained Update Projection Approach to Safe Policy Optimization},
 volume = {35},
 year = {2022},
 month={November},
 address={New Orleans, LA}
 
}

@INPROCEEDINGS{9993190,
  author={Tao, Chuyuan and Yoon, Hyung-Jin and Kim, Hunmin and Hovakimyan, Naira and Voulgaris, Petros},
  booktitle={Proc. Conference on Decision and Control}, 
  title={Path Integral Methods with Stochastic Control Barrier Functions}, 
  year={2022},
  month={December},
  volume={},
  number={},
  pages={1654-1659},
  address={Cancun, Mexico}}

@INPROCEEDINGS{11312129,
  author={Rabiee, Pedram and Hoagg, Jesse B.},
  booktitle={Proc. Conference on Decision and Control}, 
  title={Guaranteed-Safe {MPPI} Through Composite Control Barrier Functions for Efficient Sampling in Multi-Constrained Robotic Systems}, 
  year={2025},
  volume={},
  number={},
  pages={5515-5520},
  month={December},
  address={Rio de Janeiro, Brazil}
   }

@INPROCEEDINGS{YinJ-RSS-25, 
    AUTHOR    = {Ji Yin AND Oswin So AND Eric Yang Yu AND Chuchu Fan AND Panagiotis Tsiotras}, 
    TITLE     = {{Safe Beyond the Horizon: Efficient Sampling-based {MPC} with Neural Control Barrier Functions}}, 
    BOOKTITLE = {Proc. Robotics: Science and Systems}, 
    YEAR      = {2025}, 
    ADDRESS   = {LosAngeles, CA}, 
    MONTH     = {June}
}

@article{KAYPAK2026105263,
title = {Safe multi-robotic arm interaction via {3D} convex shapes},
journal = {Robotics and Autonomous Systems},
volume = {196},
pages = {105263},
year = {2026},
author = {Ali Umut Kaypak and Shiqing Wei and Prashanth Krishnamurthy and Farshad Khorrami},
}

@INPROCEEDINGS{bolun_2,
  author={Dai, Bolun and Huang, Heming and Krishnamurthy, Prashanth and Khorrami, Farshad},
  booktitle={Proc. American Control Conference}, 
  title={Data-Efficient Control Barrier Function Refinement}, 
  year={2023},
  month={May},
  volume={},
  number={},
  pages={3675-3680},
  address={San Diego, CA, US}
  }

\end{document}